\DeclarePairedDelimiter\ceil{\lceil}{\rceil}
\pgfplotsset{
	%  compat=1.12 % version for compiler
	filter discard warning=false % remove notes while compiling
	, legend cell align=left
	, minor grid style={loosely dotted, lightgray}
	, major grid style={loosely dashed, lightgray}
}
\def\A{\ensuremath{\mathcal{A}}\xspace}
\def\naf{\ensuremath{\raise.17ex\hbox{\ensuremath{\scriptstyle\mathtt{\sim}}}}\xspace}
\def\COUNT{\ensuremath{\textsc{count}}\xspace}
\def\SUM{\ensuremath{\textsc{sum}}\xspace}
\def\cla{\reflectbox{\ \ensuremath{\leadsto}\ }\xspace}
\def\W{\ensuremath{\mathcal{W}}\xspace}
\def\SM{\ensuremath{\mathit{SM}}\xspace}
\def\OSM{\ensuremath{\mathit{OSM}}\xspace}
  \title[Theory and Practice of Logic Programming]
        {Anytime answer set optimization via unsatisfiable core shrinking}
  \author[M. Alviano and C. Dodaro]
         {MARIO ALVIANO and CARMINE DODARO\\
         Department of Mathematics and Computer Science, University of Calabria, Italy\\
         \email{\{alviano,dodaro\}@mat.unical.it}}
\newtheorem{theorem}{Theorem}
\newtheorem{example}{Example}
\newcommand{\clasp}{\textsc{clasp}\xspace}
\newcommand{\wasp}{\textsc{wasp}\xspace}
\newcommand{\waspdisj}{\textsc{wasp+disj}\xspace}
\begin{document}
\label{firstpage}

\maketitle

\begin{abstract}
Unsatisfiable core analysis can boost the computation of optimum stable models for logic programs with weak constraints.
However, current solvers employing unsatisfiable core analysis either run to completion, or provide no suboptimal stable models but the one resulting from the preliminary disjoint cores analysis.
This drawback is circumvented here by introducing a progression based shrinking of the analyzed unsatisfiable cores.
In fact, suboptimal stable models are possibly found while shrinking unsatisfiable cores, hence resulting into an anytime algorithm.
Moreover, as confirmed empirically, unsatisfiable core analysis also benefits from the shrinking process in terms of solved instances.
\end{abstract}

\begin{keywords}
answer set programming;
weak constraints;
unsatisfiable cores.
\end{keywords}

%\tableofcontents

\section{Introduction}

Answer set programming (ASP) is a declarative formalism for knowledge representation and reasoning based on stable model semantics \cite{DBLP:journals/ngc/GelfondL91,DBLP:journals/amai/Niemela99,DBLP:journals/tplp/MarekNT08,DBLP:conf/aaai/Lifschitz08,DBLP:conf/rweb/EiterIK09,DBLP:journals/cacm/BrewkaET11}, for which efficient implementations such as \textsc{clasp} \cite{DBLP:conf/lpnmr/GebserKK0S15}, \textsc{cmodels} \cite{DBLP:conf/lpnmr/LierlerM04,DBLP:journals/jar/GiunchigliaLM06}, \textsc{dlv} \cite{DBLP:conf/datalog/AlvianoFLPPT10}, and \textsc{wasp} \cite{DBLP:conf/lpnmr/AlvianoDFLR13,DBLP:conf/cilc/DodaroAFLRS11} are available.
ASP programs are associated with classical models satisfying a stability condition:
only necessary information is included in a model of the input program under the assumptions provided by the model itself for the \emph{unknown knowledge} in the program, where unknown knowledge is encoded by means of \emph{default negation}. %, or \emph{negation as failure}.
Moreover, the language includes several constructs to ease the representation of real world knowledge, among them aggregates \cite{DBLP:journals/ai/SimonsNS02,DBLP:journals/ai/LiuPST10,DBLP:conf/aaaiss/BartholomewLM11,DBLP:journals/ai/FaberPL11,DBLP:journals/tocl/Ferraris11,DBLP:conf/lpnmr/AlvianoF13,DBLP:journals/tplp/GelfondZ14,DBLP:journals/tplp/AlvianoL15,DBLP:journals/tplp/AlvianoFG15,DBLP:conf/aaai/AlvianoFS16}.
%,DBLP:conf/aiia/Alviano15
%,DBLP:conf/ijcai/AlvianoFG16
%,DBLP:conf/ijcai/AlvianoL16
%,DBLP:conf/aaai/AlvianoFS16

Reasoning in presence of unknown knowledge is quite common for rational agents acting in the real world.
It is also common that real world agents cannot meet all their desiderata, and therefore ASP programs may come with \emph{weak constraints} for representing numerical preferences over jointly incompatible conditions \cite{DBLP:journals/tkde/BuccafurriLR00,DBLP:journals/ai/SimonsNS02,DBLP:conf/iclp/GebserKKS11,DBLP:journals/tplp/GebserKS11}.
Stable models are therefore associated with a cost given by the sum of the weights of the unsatisfied weak constraints, so that stable models of minimum cost are \emph{preferred}.
It is important here to stress the meaning of the word preferred:
any stable model describes a plausible scenario for the knowledge represented in the input program, even if it may be only an admissible solution of non optimum cost.
In fact, many rational agents would still accept suboptimal solutions, possibly with an estimate on the maximum distance to the optimum cost.
This flexibility is also justified by the intrinsic complexity of the problem:
the computation of an optimum stable model requires in general at least linearly many calls to a $\Sigma^P_2$ oracle \citeA{DBLP:journals/tkde/BuccafurriLR00}, and it is therefore practically unfeasible for the hardest instances.

According to the above observations, a good algorithm for answer set optimization should produce better and better stable models during the computation of an optimum stable model.
Algorithms having this property are called \emph{anytime} in the literature \cite{DBLP:journals/tplp/AlvianoDR14,DBLP:conf/ijcai/BliemKSW16}, and among them there is \emph{linear search sat-unsat}, the first algorithm introduced for answer set optimization \cite{DBLP:journals/tkde/BuccafurriLR00}.
In short, any stable model is searched so to initialize an overestimate of the optimum cost, and new stable models of improved cost are iteratively searched until an unsatisfiability arises, proving optimality of the latest found stable model.
The algorithm is simple and anytime, but in practice quite inefficient in computing stable models of reasonable cost.
Moreover, the algorithm does not provide any underestimate, or error estimation.
(Section~\ref{sec:linear} provides additional details within this respect.)

Answer set optimization is often boosted by algorithms based on \emph{unsatisfiable core} analysis, such as \textsc{oll} \cite{DBLP:conf/iclp/AndresKMS12} and \textsc{one} (introduced by \citeNP{Alviano01092015} for MaxSAT, and adapted to ASP in Section~\ref{sec:one}).
These algorithms start by searching a stable model satisfying all weak constraints, which would be therefore an optimum stable model.
On the other hand, if there is no stable model of this kind, a subset of the weak constraints that cannot be jointly satisfied is identified.
Such a set is called unsatisfiable core, and essentially evidences that any optimum stable model must sacrifice at least one of the desiderata expressed by the weak constraints.
Hence, unsatisfiable cores provide underestimates of the cost of optimum stable models.
Moreover, the program can be modified by replacing the weak constraints in the core with new weak constraints that essentially express a preference for stable models satisfying all but one of the original weak constraints, and anyhow the largest number of them, so that the process can be reiterated.

The main drawback of these algorithms is that they run to completion, and therefore provide no intermediate suboptimal stable models for the hardest instances that cannot be solved in reasonable time.
In ASP, current alternatives are based on \emph{disjoint cores analysis}, and on \emph{stratification}.
Disjoint cores analysis amounts to identifying unsatisfiable cores with empty intersection, and terminates with a (suboptimal) stable model.
After that, the usual core based algorithm is run to completion, providing no further intermediate stable models.
With stratification, instead, only weak constraints of greatest weight are first considered, so that the underestimate may be better improved by the identified unsatisfiable cores.
After all cores have been analyzed, a (suboptimal) stable model will be found, and weak constraints with the second greatest weight will be also considered.
The process is repeated, hence producing a (suboptimal) stable model at each iteration, possibly improving the current overestimate.
However, no intermediate stable model is produced before completing each iteration, which means that few overestimate improvements occur when few different weights occur in the input.
Actually, when all weak constraints have the same weight, the algorithm runs to completion providing no intermediate stable models.

The main questions addressed in this paper are therefore the following:
How to produce better and better stable models while running core based algorithms for answer set optimization?
Should any overhead be paid for obtaining such an anytime behavior?
The answer we propose for the first question is the following:
Unsatisfiable cores are often non-minimal, and their sizes can be significantly reduced by a few additional oracle calls, where each call may either return a smaller core, or a stable model possibly improving the current overestimate.
Within this respect, we implemented two strategies, referred to as \emph{linear} and \emph{reiterated progression based shrinking} (Section~\ref{sec:min}).
Concerning the second question, the overhead introduced by the additional oracle calls is often mitigated by the performance gain obtained thanks to the smaller unsatisfiable cores that the algorithm has to analyze.
Indeed, we provide empirical evidence that often the running time of our core based algorithm sensibly decreases when core shrinking is performed (Section~\ref{sec:experiment}).
The advantage of introducing our strategy for core shrinking is also confirmed by a comparison with \textsc{clasp} \cite{DBLP:conf/lpnmr/GebserKK0S15}:
even if our solver, \textsc{wasp} \cite{DBLP:conf/lpnmr/AlvianoDLR15}, is in general slower than \textsc{clasp} at completing stable model searches, its performance is sufficiently improved by core shrinking that the two solvers are almost on par in terms of solved instances, with the crucial difference that \textsc{wasp} provides both overestimates and underestimates during the computation, while ones or the others are produced by \textsc{clasp} only after running to completion.

\section{Background}

Let \A be a set of (propositional) \emph{atoms} comprising $\bot$.
A \emph{literal} is an atom $p$ possibly preceded by zero or more occurrences of the \emph{default negation} symbol $\naf$.
A \emph{sum} aggregate is of the form:
\begin{equation}\label{eq:sum}
    \SUM[w_1:\ell_1, \ldots, w_n:\ell_n] \odot k
\end{equation}
where $n, k, w_1,\ldots,w_n$ are nonnegative integers, $\ell_1, \ldots, \ell_n$ are literals, and $\odot \in \{<,\leq,\geq,>,=,\neq\}$.
If $w_1 = \cdots = w_n = 1$, (\ref{eq:sum}) is also called \emph{count} aggregate, and denoted as $\COUNT[\ell_1, \ldots, \ell_n] \odot k$.
A \emph{rule} $r$ is an implication $H(r) \leftarrow B(r)$, where $H(r)$ is a disjunction of atoms, and $B(r)$ is a conjunction of literals and aggregates.
$H(r)$ and $B(r)$ are called \emph{head} and \emph{body} of $r$, and abusing of notation also denote the sets of their elements.
If $H(r) \subseteq \{\bot\}$, then $r$ is called \emph{integrity constraint}.
A \emph{program} $\Pi$ is a set of rules.
Let $\mathit{At}(\Pi)$ denote the set of atoms occurring in $\Pi$.

An \emph{interpretation} $I$ is a set of atoms not containing $\bot$.
Relation $\models$ is inductively defined as follows:
for $p \in \A$, $I \models p$ if $p \in I$;
$I \models \naf \ell$ if $I \not\models \ell$;
for $A$ of the form (\ref{eq:sum}), $I \models A$ if $\sum_{i \in [1..n],\ I \models \ell_i}{w_i} \odot k$;
for a rule $r$, $I \models B(r)$ if $I \models \ell$ for all $\ell \in B(r)$, and $I \models r$ if $I \cap H(r) \neq \emptyset$ whenever $I \models B(r)$;
for a program $\Pi$, $I \models \Pi$ if $I \models r$ for all $r \in \Pi$.
$I$ is a \emph{model} of a literal, aggregate, rule, or program $\pi$ if $I \models \pi$.
Note that $I \not\models \bot$, and $I \models \naf\bot$, for any interpretation $I$;
let $\top$ be a shortcut for $\naf\bot$.

The \emph{reduct} $\Pi^I$ of a program $\Pi$ with respect to an interpretation $I$ is obtained from $\Pi$ as follows:
(i) any rule $r$ such that $I \not\models B(r)$ is removed;
(ii) any negated literal $\ell$ such that $I \not\models \ell$ is replaced by $\bot$;
(iii) any negated literal $\ell$ such that $I \models \ell$ is replaced by $\top$.
An interpretation $I$ is a \emph{stable model} of a program $\Pi$ if $I \models \Pi$, and there is no $J \subset I$ such that $J \models \Pi^I$.
Let $\SM(\Pi)$ denote the set of stable models of $\Pi$.
A program $\Pi$ is \emph{coherent} if $\SM(\Pi) \neq \emptyset$; otherwise, $\Pi$ is \emph{incoherent}.

\begin{example}\label{ex:sm}
Let $\Pi_1$ be the program comprising the following five rules:
\begin{equation*}
    \begin{array}{lllll}
        a \vee c \leftarrow \naf b, \naf d & a \leftarrow \naf b, c & c \leftarrow a, b & b \leftarrow a, c & d \leftarrow \naf\naf d
    \end{array}
\end{equation*}
Its stable models are $I_1 = \{a\}$ and $I_2 = \{d\}$, and the associated program reducts are the following:
$\Pi_1^{I_1} = \{a \vee c \leftarrow\}$ and $\Pi_1^{I_2} = \{d \leftarrow\}$, where $\top$ in rule bodies is omitted for simplicity.
\hfill$\blacksquare$
\end{example}

A \emph{weak constraint} $r$ is of the form $w@l \cla B(r)$, where $B(r)$ is a conjunction of literals and aggregates, and $w,l$ are positive integers
%called \emph{weight} and \emph{level} of $r$, 
denoted respectively $\mathit{weight}(r)$ and $\mathit{level}(r)$.
For a multiset \W of weak constraints, and $l \geq 1$, let $\W^l$ denote the multiset of weak constraints in \W whose level is $l$.
The \emph{cost} of an interpretation $I$ for $\W^l$ is $\W^l(I) := \sum_{r \in \W^l,\ I \models B(r)}{\mathit{weight(r)}}$.
For any pair $I,J$ of interpretations, $J <_{\W} I$ if there is $l \geq 1$ such that both $\W^l(J) < \W^l(I)$, and $\W^{l+i}(J) \leq \W^{l+i}(I)$ for all $i \geq 1$.
$I$ is an \emph{optimum stable model} of a program $\Pi$ with respect to a multiset \W of weak constraints if $I \in \SM(\Pi)$, and there is no $J \in \SM(\Pi)$ such that $J <_\W I$.
Let $\OSM(\Pi,\W)$ denote the set of optimum stable models of $\Pi$ with respect to \W.

\begin{example}\label{ex:osm}
Continuing with Example~\ref{ex:sm}, let $\W_1$ be %the following multiset of weak constraints:
$\{1@2 \cla d$, $2@1 \cla a$, $2@1 \cla b$, $1@1 \cla c\}$.
%\begin{equation*}
%    \begin{array}{llll}
%        1@2 \cla d &
%        2@1 \cla a &
%        2@1 \cla b &
%        1@1 \cla c 
%    \end{array}
%\end{equation*}
Hence, $\OSM(\Pi_1,\W_1)$ only contains $I_1$, with $\W_1^2(I_1) = 0$ and $\W_1^1(I_1) = 2$.
Indeed, $I_2$ is such that \linebreak[5] \mbox{$\W_1^2(I_2) = 1$} and $\W_1^1(I_1) = 0$, and therefore it is discarded because $I_1 <_{\W_1} I_2$.
\hfill$\blacksquare$
\end{example}

\section{Optimum stable model search}

The computational problem analyzed in this paper is referred to as \emph{optimum stable model search}:
Given a (coherent) program $\Pi$ and a multiset of weak constraints \W, compute an optimum stable model $I^* \in \OSM(\Pi,\W)$.
Currently available algorithms for this problem are either inefficient, as linear search sat-unsat (Section~\ref{sec:linear}), or not anytime, as algorithms based on unsatisfiable core analysis;
\textsc{one} is among the core-based algorithms originally introduced for MaxSAT \cite{DBLP:conf/ijcai/AlvianoDR15}, and is adapted to ASP in Section~\ref{sec:one}.
A concrete strategy to obtain anytime variants of core-based algorithms is then presented in Section~\ref{sec:min};
it consists in a progression (or linear) search to shrink unsatisfiable cores that may also discover better and better stable models.

\subsection{Inefficiencies in linear search sat-unsat}\label{sec:linear}

An immediate algorithm for addressing this problem is known as \emph{linear search sat-unsat}:
a first stable model is searched, and new stable models of improved cost are iteratively computed until an incoherence arises, proving optimality of the last computed model.
To ease the definition of such an algorithm, weak constraints are transformed into \emph{relaxed} integrity constraints by introducing fresh literals called \emph{soft literals}:
whenever the body of a weak constraint is true, the associated soft literal is inferred false so to satisfy the relaxed integrity constraint.
Formally, for a weak constraint $r$, its relaxed form is $\bot \leftarrow B(r) \wedge s_r$, where $s_r$ is the soft literal associated with $r$, and defined by a \emph{choice rule} $s_r \leftarrow \naf\naf s_r$.
For a multiset of weak constraints \W, let $\mathit{relax}(\W)$ be the set of relaxed integrity constraints and choice rules obtained from the weak constraints in \W, and $\mathit{soft}(\W)$ be the set of associated soft literals.
%Abusing of notation, let $\mathit{weight}(s_r)$ denote the weight of the weak constraints associated with the soft literal $s_r$, that is, $\mathit{weight}(r)$.

\begin{example}\label{ex:relax}
Continuing with Example~\ref{ex:osm}, $\mathit{relax}(\W_1^2)$ is $\{\bot \leftarrow d, s_1\} \cup \{s_1 \leftarrow \naf\naf s_1\}$, where $s_1$ is the soft literal in $\mathit{soft}(\W_1^2)$, and $\mathit{relax}(\W_1^1)$ is $\{\bot \leftarrow a, s_2;$ $\bot \leftarrow b, s_3;$ $\bot \leftarrow c, s_4\} \cup \{s_i \leftarrow \naf\naf s_i \mid i \in [2..4]\}$, where $s_2,s_3,s_4$ are the soft literals in $\mathit{soft}(\W_1^1)$.
\hfill$\blacksquare$
\end{example}

\begin{algorithm}[t]
    \SetKwInOut{Input}{Input}
    \SetKwInOut{Output}{Output}
    \Input{A coherent program $\Pi$, and a nonempty multiset of weak constraints $\W$.}
    \Output{An optimum stable model $I^* \in \OSM(\Pi,\W)$.}
    $V := \mathit{At}(\Pi)$\tcp*{visible atoms}
    \While{$\W \neq \emptyset$}{
        $l := \max\{i \in \mathbb{N}^+ \mid \W^i \neq \emptyset\}$;\quad
        $\Pi := \Pi \cup \mathit{relax}(\W^l)$;\quad
        $\mathit{ub} := 1+\sum_{r \in \W^l}{\mathit{weight}(r)}$\;
        \lIf{$I^*$ \textbf{is} defined}{$\mathit{ub} := \W^l(I^*)$}
        $(\mathit{res},I) := \mathit{solve}(\Pi \cup \{\bot \leftarrow \SUM[(\mathit{weight}(r) : \naf s_r) \mid s_r \in \mathit{soft}(\W^l)] \geq \mathit{ub}\})$\;\label{alg:linear:ln:solve}
        \lIf{$\mathit{res}$ \textbf{is} $\mathit{COHERENT}$}{
            $I^* := I \cap V$;\quad
            $\mathit{ub} := \W^l(I)$;\quad
            \textbf{goto}~\ref{alg:linear:ln:solve}%
        }
        $\Pi := \Pi \cup \{\bot \leftarrow \SUM[(\mathit{weight}(r) : \naf s_r) \mid s_r \in \mathit{soft}(\W^l)] \neq \mathit{ub}\}$;\quad
        $\W := \W \setminus \W^l$\;
    }
    \Return $I^*$\;
    \caption{Linear search sat-unsat}\label{alg:linear}
\end{algorithm}

Linear search sat-unsat is reported as Algorithm~\ref{alg:linear}.
Levels are processed one at a time, from the greatest to the smallest, by relaxing weak constraints (line~3) and iteratively searching for stable models improving the current upper bound (lines~5--6).
When the upper bound cannot be further improved, an integrity constraint is added to the program in order to discard all stable models of cost different from the upper bound (line~7), so that the next level can be processed correctly.

\begin{example}\label{ex:linear}
Continuing with the previous examples, Algorithm~\ref{alg:linear} with input $\Pi_1$ and $\W_1$ would relax weak constraints in $\W_1^2$, and set $\mathit{ub}$ to 2.
After that, a stable model of the following program:
\begin{equation*}
    \begin{array}{lllll}
        a \vee c \leftarrow \naf b, \naf d & a \leftarrow \naf b, c & c \leftarrow a, b & b \leftarrow a, c & d \leftarrow \naf\naf d \\
        \bot \leftarrow d, s_1 & 
        \multicolumn{2}{l}{s_1 \leftarrow \naf\naf s_1} &
        \multicolumn{2}{l}{\bot \leftarrow \SUM[1 : \naf s_1] \geq 2}
    \end{array}
\end{equation*}
is searched, and $I_2 = \{d\}$ may be found.
In this case, $\mathit{ub}$ is decreased to 1, and a stable model satisfying the new constraint $\bot \leftarrow \SUM[1 : \naf s_1] \geq 1$ is searched.
Eventually, $I_1' = \{a,s_1\}$ is found, and $\mathit{ub}$ is set to 0.
The next stable model search trivially fails because of $\bot \leftarrow \SUM[1 : \naf s_1] \geq 0$, and therefore the cost for level 2 is fixed to 0 by adding $\bot \leftarrow \SUM[1 : \naf s_1] \neq 0$.
Weak constraints of level 1 are relaxed, and $\mathit{ub}$ is set to $\W_1^1(I_1) = 2$.
A stable model of the following program:
\begin{equation*}
    \begin{array}{lllll}
        a \vee c \leftarrow \naf b, \naf d & a \leftarrow \naf b, c & c \leftarrow a, b & b \leftarrow a, c & d \leftarrow \naf\naf d \\
        \bot \leftarrow d, s_1 & 
        \multicolumn{2}{l}{s_i \leftarrow \naf\naf s_i \quad (\forall i \in [1..4])} &
        \multicolumn{2}{l}{\bot \leftarrow \SUM[1 : \naf s_1] \neq 0}\\
        \bot \leftarrow a, s_2 &
        \bot \leftarrow b, s_3 &
        \bot \leftarrow c, s_4 &
        \multicolumn{2}{l}{\bot \leftarrow \SUM[2 : \naf s_2, 2 : \naf s_3, 1 : \naf s_4] \geq 2}
    \end{array}
\end{equation*}
is searched, but none is found.
The algorithm terminates returning $I_1' \cap \{a,b,c,d\} = \{a\} = I_1$.
\hfill$\blacksquare$
\end{example}

The main drawback of linear search sat-unsat is that in practice it is quite inefficient.
In fact, the sum aggregate introduced in order to improve the current upper bound (line~4 of Algorithm~\ref{alg:linear}) allows the solver to discard several stable models, but often does not provide sufficiently strong evidences to help the refutation process.
It is because of this fact that linear search sat-unsat is often unable to prove optimality of a stable model, even if it is provided externally.
In fact, an integrity constraint of the form $\bot \leftarrow \SUM[(w_i : \ell_i) \mid 1 \leq i \leq n] \geq k$ is satisfied by exponentially many interpretations.
Even when all weights are 1, the number of satisfying interpretations is in the order of $\binom{n}{k-1}$.
It is a prohibitive number even for small values of $k$ when $n$ is a large integer, as it usually happens for Algorithm~\ref{alg:linear}, where $n$ is essentially the number of weak constraints.

\subsection{Almost silent unsatisfiable core analysis}\label{sec:one}

The idea underlying algorithms based on unsatisfiable core analysis is the following:
a set of soft literals that cannot be jointly satisfied is identified, and the reason of unsatisfiability is removed by adjusting the program and the soft literals;
the process is repeated until a stable model is found, which is also guaranteed to be optimum.
To ease the definition of such an algorithm, the notion of unsatisfiable core is given in terms of \emph{assumptions}:
modern ASP solvers accept as input a set $S$ of atoms, called assumptions, in addition to the usual logic program $\Pi$, and return a stable model $I$ of $\Pi$ such that $S \subseteq I$, if it exists;
otherwise, they return a set $C \subseteq S$ such that $\Pi \cup \{\bot \leftarrow \naf p \mid p \in C\}$ is incoherent, which is called \emph{unsatisfiable core}.

\pagebreak[4]
\begin{example}\label{ex:core}
Consider program $\Pi_1$ from Example~\ref{ex:sm} with the addition of $\mathit{relax}(\W_1^2)$ and $\mathit{relax}(\W_1^1)$:
\begin{equation*}
    \begin{array}{lllllll}
        a \vee c \leftarrow \naf b, \naf d & a \leftarrow \naf b, c & c \leftarrow a, b & b \leftarrow a, c & d \leftarrow \naf\naf d \\
        \bot \leftarrow d, s_1 & 
        \bot \leftarrow a, s_2 &
        \bot \leftarrow b, s_3 &
        \bot \leftarrow c, s_4 &
        s_i \leftarrow \naf\naf s_i \quad (\forall i \in [1..4])
    \end{array}
\end{equation*}
If $\{s_1,\ldots,s_4\}$ is the set of assumptions, the unsatisfiable cores are $\{s_1,s_2\}$, and its supersets.
\hfill$\blacksquare$
\end{example}

\begin{algorithm}[t]
    \SetKwInOut{Input}{Input}
    \SetKwInOut{Output}{Output}
    \Input{A coherent program $\Pi$, and a nonempty multiset of weak constraints $\W$.}
    \Output{An optimum stable model $I^* \in \OSM(\Pi,\W)$.}
    $V := \mathit{At}(\Pi)$;\qquad \lFor{$p \in \A$}{$w(p) := 0$}
    \While{$\W \neq \emptyset$}{
        $l := \max\{i \in \mathbb{N}^+ \mid \W^i \neq \emptyset\}$;\quad
        \!$\Pi := \Pi \cup \mathit{relax}(\W^l)$;\quad
        \!$\mathit{lb} := 0$;\quad
        \!$\mathit{ub} := \infty$;\quad
        \!$\mathit{stratum} := \infty$\;
        \lFor(\tcp*[f]{soft literals initialization}){$s_r \in \mathit{soft}(\W^l)$}{$w(s_r) := \mathit{weight}(r)$}
        \lIf{$I^*$ \textbf{is} defined}{$\mathit{ub} := \W^l(I^*)$;\quad \textsc{hardening}($\Pi,w,\mathit{lb},\mathit{ub}$)}
        $\mathit{stratum} := \max_{p \in \A, w(p) < \mathit{stratum}}{w(p)}$\label{alg:one:ln:stratum}\tcp*{next stratum}
        $(\mathit{res},I,C) := \mathit{solve}(\Pi, \{p \in \A \mid w(p) \geq \mathit{stratum}\})$\;\label{alg:one:ln:solve}
        \If{$\mathit{res}$ \textbf{is} $\mathit{INCOHERENT}$}{
            \setcounter{AlgoLine}{15}
            Let $C$ be $\{p_0,\ldots,p_n\}$ (for some $n \geq 0$), and $s_1,\ldots,s_n$ be fresh atoms\;\label{alg:one:ln:analyze}
            %$\mathit{mw} := \min_{p \in C}{w(p)}$;\quad
            $\mathit{lb} := \mathit{lb} + \mathit{stratum}$\;
            \lFor(\tcp*[f]{remainders}){$i \in [0..n]$}{$w(p_i) := w(p_i) - \mathit{stratum}$}
            \lFor(\tcp*[f]{new soft literals}){$i \in [1..n]$}{$w(s_i) := \mathit{stratum}$}
            $\Pi := \Pi \cup \{s_i \leftarrow \naf\naf s_i \mid i \in [1..n]\} \cup \{\bot \leftarrow s_i, \naf s_{i+1} \mid i \in [1..n-1]\}$
                $\phantom{lxxxxx}\cup \{\bot \leftarrow \COUNT[p_0,\ldots,p_n, \naf s_1,\ldots,\naf s_n] < n\}$%
            \tcp*{relax core}
            \textsc{hardening}($\Pi,w,\mathit{lb},\mathit{ub}$);\quad \textbf{goto}~\ref{alg:one:ln:solve}\;
        }
        
        \lIf{$\W^l(I) < \mathit{ub}$}{
            $I^* := I \cap V$;\quad
            $\mathit{ub} := \W^l(I)$;\quad
            \textsc{hardening}($\Pi,w,\mathit{lb},\mathit{ub}$)%
        }
        
        \lIf{$\exists p \in \A$ \textbf{such that} $1 \leq w(p) < \mathit{stratum}$}{\textbf{goto}~\ref{alg:one:ln:stratum}}
        $\W := \W \setminus \W^l$\;\label{alg:one:ln:remove}
    }
    \Return $I^*$\;
    \caption{Unsatisfiable core analysis with \textsc{one}}\label{alg:one}
\end{algorithm}
\begin{procedure}[t]
    \lFor{$p \in \A$ \textbf{such that} $\mathit{lb} + w(p) > \mathit{ub}$}{$\Pi := \Pi \cup \{\bot\!\leftarrow\!\naf p\}$; \quad $w(s_r)\!:=\!0$}
    \caption{hardening($\Pi, w, \mathit{lb}, \mathit{ub}$)}
\end{procedure}

The algorithm presented in this paper is \textsc{one}, reported as Algorithm~\ref{alg:one} (lines 9--15 will be \emph{injected} later to shrink unsatisfiable cores).
Every atom $p$ is associated with a weight $w(p)$, initially set to zero (line~1), and levels are processed from the greatest to the smallest by relaxing constraints (line~3).
Note that soft literals are associated with nonzero weights (line~4), and are processed per stratum, i.e., greatest weights are processed first (line~6).
%Moreover, whenever a soft literal is guaranteed to occur in all optimum stable models, its interpretation is fixed to true by introducing an \emph{hardening} integrity constraint (line~6).
A stable model containing all soft literals in the current stratum is then searched (line~7).
If an unsatisfiable core $\{p_0,\ldots,p_n\}$ is returned (lines~8 and 16), since at least one of $p_0,\ldots,p_n$ must be false in any optimum stable model, the lower bound is increased by the minimum weight among $w(p_0),\ldots,w(p_n)$ (or equivalently by $\mathit{stratum}$; line~17).
Such a quantity is removed from $p_0,\ldots,p_n$ (line~18), and assigned to $n$ new soft literals $s_1,\ldots,s_n$ (line~19).
The new soft literals, and the following constraint (line~20):
\begin{equation*}
    \bot \leftarrow \COUNT[p_0,\ldots,p_n, \naf s_1,\ldots,\naf s_n] < n
\end{equation*}
enforce the next call to function $\mathit{solve}$ to search for a stable model satisfying at least $n$ literals among $p_0,\ldots,p_n$.
Moreover, note that symmetry breakers of the form $\bot \leftarrow s_i, \naf s_{i+1}$ are also added to $\Pi$, so that $s_i$ is true if and only if at least $n-i+1$ literals among $p_0,\ldots,p_n$ are true.

The current stratum is then processed again (line~21), until a stable model is found.
In this case, the upper bound is possibly improved (line~22), and the stratum is extended to soft literals of smaller weight (line~23), if there are.
Otherwise, the processed stratum covers all soft literals, and weak constraints of the current level are removed so that the next level can be considered (line~24).
Note that the algorithm also includes the \textsc{hardening} procedure, which essentially enforces truth of soft literals that are guaranteed to belong to all optimum stable models.

\begin{example}\label{ex:one}
Initially, the weak constraint in $\W_1^2$ is relaxed, and the soft literal $s_1$ is associated with weight $w(s_1) = 1$.
A stable model for the assumption $\{s_1\}$ and the following program:
\begin{equation*}
    \begin{array}{lllllll}
        a \vee c \leftarrow \naf b, \naf d & a \leftarrow \naf b, c & c \leftarrow a, b & b \leftarrow a, c & d \leftarrow \naf\naf d \\
        \bot \leftarrow d, s_1 & 
        s_1 \leftarrow \naf\naf s_1
    \end{array}
\end{equation*}
is searched, and $I_1' = \{a,s_1\}$ is found.
Hence, $\mathit{ub}$ is set to 0, and $s_1$ is hardened by adding $\bot \leftarrow \naf s_1$.
Weak constraints in $\W_1^1$ are relaxed, $w$ is such that $w(s_2) = 2$, $w(s_3) = 2$ and $w(s_4) = 1$, and $\mathit{ub}$ is set to $\W_1^1(I_1) = 2$.
A stable model for the assumptions $\{s_2,s_3\}$ and the following program:
\begin{equation*}
    \begin{array}{lllllll}
        a \vee c \leftarrow \naf b, \naf d & a \leftarrow \naf b, c & c \leftarrow a, b & b \leftarrow a, c & d \leftarrow \naf\naf d & \bot \leftarrow \naf s_1 \\
        \bot \leftarrow d, s_1 & 
        \bot \leftarrow a, s_2 &
        \bot \leftarrow b, s_3 &
        \bot \leftarrow c, s_4 &
        \multicolumn{2}{l}{s_i \leftarrow \naf\naf s_i \quad (\forall i \in [1..4])}
    \end{array}
\end{equation*}
is searched, and an unsatisfiable core, say $\{s_2,s_3\}$, is returned.
The lower bound $\mathit{lb}$ is set to 2, the program is extended with the following rules:
\begin{equation*}
    \begin{array}{llllll}
        s_5 \leftarrow \naf\naf s_5 &
        \bot \leftarrow \COUNT[s_2, s_3, \naf s_5] < 1
    \end{array}
\end{equation*}
and function $w$ is now such that $w(s_2) = 0$, $w(s_3) = 0$, $w(s_4) = 1$ and $w(s_5) = 2$.
Since $\mathit{lb} = \mathit{ub}$, all soft literals are hardened.
The algorithm terminates returning $I_1' \cap \{a,b,c,d\} = \{a\} = I_1$.
\hfill$\blacksquare$
\end{example}

The algorithm described in this section is almost silent, as it essentially runs to completion without printing any suboptimal stable models but those computed when changing stratum or level.
This fact is actually limiting the use of core-based algorithms in ASP, as common instances usually have one or few levels, and also weights are often uniform.
In particular, when all weak constraints have the same level and the same weight, the algorithm is completely silent.

\begin{example}\label{ex:dumb}
Consider now a multiset $\W_2$ of weak constraints obtained from $\W_1$ by setting all weights and levels to 1.
Algorithm~\ref{alg:one} starts with assumptions $\{s_1,\ldots,s_4\}$ and the following program:
\begin{equation*}
    \begin{array}{lllllll}
        a \vee c \leftarrow \naf b, \naf d & a \leftarrow \naf b, c & c \leftarrow a, b & b \leftarrow a, c & d \leftarrow \naf\naf d \\
        \bot \leftarrow d, s_1 & 
        \bot \leftarrow a, s_2 &
        \bot \leftarrow b, s_3 &
        \bot \leftarrow c, s_4 &
        \multicolumn{2}{l}{s_i \leftarrow \naf\naf s_i \quad (\forall i \in [1..4])}
    \end{array}
\end{equation*}
Hence, an unsatisfiable core, say $\{s_1,\ldots,s_4\}$, is returned, and the following rules are added:
\begin{equation*}
    \begin{array}{llllll}
        s_i \leftarrow \naf\naf s_i \quad (\forall i \in [5..7]) &
        \bot \leftarrow s_5, \naf s_6 &
        \bot \leftarrow s_6, \naf s_7 \\
        \multicolumn{3}{l}{\bot \leftarrow \COUNT[s_1, s_2, s_3,s_4, \naf s_5, \naf s_6, \naf s_7] < 3}
    \end{array}
\end{equation*}
where $s_5,s_6,s_7$ are the new soft literals.
The assumptions are now $\{s_5,s_6,s_7\}$, and either $I_1'' = \{a,s_1,s_3,s_4,s_5,s_6,s_7\}$ or $I_2' = \{d,s_2,s_3,s_4,s_5,s_6,s_7\}$ is found.
The algorithm terminates because $\mathit{lb} = \mathit{ub} = \W_2^1(I_1'') = \W_2^1(I_2') = 1$.
Note that no stable model was found before running to completion, and in general Algorithm~\ref{alg:one} has to analyze several unsatisfiable cores to terminate.
\hfill$\blacksquare$
\end{example}

\subsection{Unsatisfiable core shrinking and upper bounds}\label{sec:min}

\begin{algorithm}[t]
\setcounter{AlgoLine}{8}
    $m := -1$;\quad
    $\mathit{pr} := 1$\;
    Let $C$ be $\{p_0,\ldots,p_n\}$ (for some $n \geq 0$)\;\label{alg:min:ln:core}
    $(\mathit{res},I,C') := \mathit{solve\_with\_budget}(\Pi, \{p_i \mid i \in [0..m+\mathit{pr}]\})$\;
    \lIf(\tcp*[f]{smaller core found}){$\mathit{res}$ \textbf{is} $\mathit{INCOHERENT}$}{$C := C'$}
    \lIf{$\mathit{res}$ \textbf{is} $\mathit{COHERENT}$ {\bf and} $\W^l(I) < \mathit{ub}$}{$I^* := I \cap V$;\quad $\mathit{ub} := \W^l(I)$}

    \lIf(\tcp*[f]{reiterate progression}){$m + 2 \cdot \mathit{pr} \geq |C|-1$}{$m := m + \mathit{pr}$;\quad $\mathit{pr} := 1/2$}
    \lIf(\tcp*[f]{increase progression}){$m + 2 \cdot \mathit{pr} < |C|-1$}{$\mathit{pr} := 2 \cdot \mathit{pr}$;\quad\textbf{goto}~\ref{alg:min:ln:core}}
    \caption{Unsatisfiable core shrinking with reiterated progression}\label{alg:min}
\end{algorithm}

Unsatisfiable cores returned by function $\mathit{solve}$ are not subset minimal in general.
The non-minimality of the unsatisfiable core is justified both theoretically and practically:
linearly many coherence checks are required in general to verify the minimality of an unsatisfiable core, hence giving a $\Delta^P_3$-complete problem;
on the other hand, extracting an unsatisfiable core after a stable model search failure is quite easy and usually implemented by identifying the assumptions involved in the refutation.
The non minimality of the analyzed unsatisfiable cores may affect negatively the performance of subsequent calls to function $\mathit{solve}$ due to aggregation over large sets.
However, it also gives an opportunity to improve Algorithm~\ref{alg:one}:
the size of unsatisfiable cores can be reduced by performing a few stable model searches within a given budget on the running time.
In more detail, Algorithm~\ref{alg:min} is injected between lines~8 and 16 of Algorithm~\ref{alg:one}.
It implements a progression search in the unsatisfiable core $\{p_0,\ldots,p_n\}$:
the size of the assumptions passed to function $\mathit{solve\_with\_budget}$ is doubled at each call (line~15), and the progression is reiterated when all assumptions are covered (line~14).
If function $\mathit{solve\_with\_budget}$ terminates within the given budget, it either returns a smaller unsatisfiable core (line~12), or a stable model that possibly improves the current upper bound (line~13).

\begin{example}\label{ex:min}
Consider $\W_2$ from Example~\ref{ex:dumb}, and the unsatisfiable core $\{s_1,\ldots,s_4\}$ returned after the first call to function $\mathit{solve}$.
The shrinking process searches a stable model with assumption $\{s_1\}$, and $\{a,s_1\} \cup X$ (for some $X \subseteq \{s_3,s_4\}$) may be found within the allotted budget.
In any case, a stable model satisfying the assumptions $\{s_1,s_2\}$ is searched, and the unsatisfiable core $\{s_1,s_2\}$ may be returned if the budget is sufficient.
Otherwise, the progression is reiterated, and one more soft literal is added to the assumptions.
Hence, $\{s_1,s_2,s_3\}$ may be returned as an unsatisfiable core, or the budget may be insufficient and the original unsatisfiable core will be processed.
\hfill$\blacksquare$
\end{example}

As an alternative, the shrinking procedure reported in Algorithm~\ref{alg:min} can be modified as follows:
variable $\mathit{pr}$ is not doubled in line~15, but instead it is incremented by one, i.e., $\mathit{pr} := \mathit{pr} + 1$.
The resulting procedure is called \emph{linear based shrinking}.
For unsatisfiable cores of size 4 or smaller, as those considered in Example~\ref{ex:min}, the two shrinking procedures coincide, while in general linear based shrinking performs more stable model searches.

\begin{theorem}
Let $C$ be an unsatisfiable core.
Function $\mathit{solve\_with\_budget}$ is invoked $O((\log{|C|})^2)$ times by  progression based shrinking, and $O(|C|)$ times by linear based shrinking.
\end{theorem}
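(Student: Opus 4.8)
The plan is to reduce the statement to an arithmetic fact about how the loop variables $m$ and $\mathit{pr}$ of Algorithm~\ref{alg:min} evolve, treating the size $|C|$ as essentially fixed, and then to count the iterations of the resulting deterministic process separately for the two variants. I would first note that $\mathit{solve\_with\_budget}$ is invoked exactly once per execution of the block, the loop being realised by the \textbf{goto} of line~15 back to line~10, and that the branch taken at lines~14--15 depends only on the triple $(m,\mathit{pr},n)$ with $n := |C|-1$. A stable model returned at line~13 changes none of these, and whether the time budget was exhausted is equally immaterial to the count; only line~12 touches $n$, and only by decreasing it.

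The key preliminary observation is that a successful shrink ends the procedure immediately. Indeed, at line~12 the returned core $C'$ is contained in the assumption set $\{p_i \mid i \in [0 .. m+\mathit{pr}]\}$, so $|C'|-1 \le m+\mathit{pr}$; hence the guard of line~14 holds, $m$ is incremented by $\mathit{pr}$, and the guard of line~15, which now reads $m + \mathit{pr} + 1 < |C'| - 1$, fails, so control leaves the block. Consequently the number of invocations is bounded by that of the deterministic process in which $n$ is fixed to $|C|-1$ and lines~12--13 are treated as no-ops; it suffices to bound this process.

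To count, I would partition its iterations into rounds, a round being a maximal block of iterations sharing a common value of $m$, during which $\mathit{pr}$ steps through $1, 2, 4, \dots$ in the progression variant and $1, 2, 3, \dots$ in the linear variant (with a harmless one-off additive shift of $1/2$ caused by line~14). A round ends at the first iteration with $m + 2\,\mathit{pr} \ge n$, whereupon line~14 advances $m$ by the last value $\mathit{pr}^{*}$ of $\mathit{pr}$ used, and line~15 either opens a new round with $\mathit{pr}=1$ or halts. Writing $\Phi := n - m$ for the gap at the start of a round, one verifies $\mathit{pr}^{*} \ge \Phi/2$ in both variants (for progression $\mathit{pr}^{*} = 2^{k-1}$ with $2^{k} \ge \Phi$; for linear $\mathit{pr}^{*} = \ceil{\Phi/2}$), so $\Phi$ at least halves from one round to the next; as it starts at $|C|$, there are $O(\log|C|)$ rounds. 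In the progression variant a round with gap $\Phi$ doubles $\mathit{pr}$ until $2\,\mathit{pr} \ge \Phi$, hence uses $\ceil{\log_2 \Phi} + O(1) = O(\log|C|)$ invocations, for a grand total of $O(\log|C|)\cdot O(\log|C|) = O((\log|C|)^2)$. In the linear variant a round uses $\ceil{\Phi/2} + O(1)$ invocations, and since the successive gaps are dominated by $|C|, |C|/2, |C|/4, \dots$, the geometric sum is $O(|C|)$, the per-round $O(1)$ terms adding only $O(\log|C|)$. For linear one may alternatively observe that the top assumption index $m + \mathit{pr}$, which stays below $n$ while the loop continues, grows by at least $1$ at each such iteration, which already yields at most $n = O(|C|)$ iterations.

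I expect the main obstacle to be the bookkeeping around the reiteration mechanism — the $\mathit{pr} := 1/2$ assignment in line~14 and its interplay with the ``double'' of line~15 versus the ``increment by one'' of the linear variant — so that the round abstraction is faithful and the off-by-one (and off-by-one-half) issues are genuinely harmless; together with making sure, as in the preliminary observation, that the mid-procedure changes of $|C|$ cannot spoil the bound.
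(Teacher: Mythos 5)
Your proof is correct and follows essentially the same route as the paper's: reduce to the worst case in which no smaller core is ever returned (justified, as you show, by the fact that an $\mathit{INCOHERENT}$ answer makes the guard of line~14 hold and that of line~15 fail, terminating the block), and then count invocations round by round, with the uncovered gap at least halving between consecutive rounds, giving $O(\log|C|)$ rounds of $O(\log|C|)$ calls for progression and a geometric sum of $O(|C|)$ calls for the linear variant. If anything, your account is more faithful to the algorithm than the paper's own: the paper describes the linear variant as making $|C|-1$ calls in a single pass and then stopping because $m \geq |C|-1$, whereas with the stated guard $m+2\,\mathit{pr} \geq |C|-1$ the first round ends after about $|C|/2$ calls and the procedure reiterates on halved gaps — your geometric-sum and monotone-top-index arguments correctly yield the same $O(|C|)$ bound.
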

\begin{proof}
The worst case occurs when $\mathit{solve\_with\_budget}$ never returns $\mathit{COHERENT}$, either because the tested set of assumptions is not an unsatisfiable core, or because the allotted budget is insufficient.
This is the case, for example, if $C$ is already minimal.
For progression based shrinking, line~14 of Algorithm~\ref{alg:min} is first executed after $k := \ceil{\log{|C|}}$ executions of lines~11 and 15.
After that, and in the worst case, the process is repeated on half of the literals in $C$, for $k-1$ executions of lines~11 and 15.
Hence, for a total of $k(k+1)/2$ executions of function $\mathit{solve\_with\_budget}$.
For linear based shrinking, instead, line~14 of Algorithm~\ref{alg:min} is first executed after $|C|-1$ executions of lines~11 and 15.
After that, the process terminates because $m \geq |C|-1$.
\end{proof}

\section{Implementation and experiments}\label{sec:experiment}

Algorithm \textsc{one} \cite{DBLP:conf/ijcai/AlvianoDR15} has been implemented in \textsc{wasp} \cite{DBLP:conf/lpnmr/AlvianoDLR15}, an ASP solver also supporting, among other algorithms, linear search sat-unsat.
The implementation of \textsc{one} optionally includes the two shrinking procedures described in Section~\ref{sec:min}, so that both underestimates and overestimates can be produced by \textsc{wasp} in any case, weighted or unweighted.
Currently, the time budget of function $\mathit{solve\_with\_budget}$ is fixed to 10 seconds, but the architecture of \textsc{wasp} can easily accommodate alternative options, such as a budget proportional to the time required to find the unsatisfiable core to be shrank.

\textsc{wasp} also implements \emph{disjoint cores analysis}, which is essentially a preliminary step where only soft literals associated with weak constraints in the input are passed as assumptions to function $\mathit{solve}$, while new soft literals introduced by the analysis of detected cores are temporarily ignored.
Disjoint cores analysis terminates with the detection of a stable model, and after that algorithm \textsc{one} is run as usual not distinguishing between initial and new soft literals.

In order to assess empirically the impact of these shrinking procedures to the performance of \textsc{wasp}, instances (with weak constraints) from the ASP Competition 2015 \cite{DBLP:conf/lpnmr/GebserMR15} were considered.
Moreover, \textsc{wasp} was also compared with \textsc{clasp} \cite{DBLP:conf/lpnmr/GebserKK0S15}, which implements linear search sat-unsat (strategy \texttt{bb,1}) and \textsc{oll} (strategy \texttt{usc}; \texttt{usc,1} with disjoint cores analysis; \citeNP{DBLP:conf/iclp/AndresKMS12}), a core based algorithm that inspired the definition of \textsc{one}.
The experiments were run on an Intel Xeon 2.4 GHz with 16 GB of memory, and time and memory were limited to 20 minutes and 15 GB, respectively.

\begin{table}[b!]
	\footnotesize
	\caption{
		Solved instances, and wins in parenthesis (180 testcases; 20 for each problem).
	}\label{tab:experiments}
	\tabcolsep=0.030cm	
	\centering
	\begin{tabular}{lrrrrrrrrrrr}
		\toprule
		& \multicolumn{3}{c}{\clasp}	&\multicolumn{4}{c}{\wasp} & \multicolumn{3}{c}{\waspdisj} \\
		\cmidrule{2-4}\cmidrule{5-8}\cmidrule{9-11}
	Problem 	&	 \textsc{linSU} 	&	 \textsc{oll} 	&	 \textsc{oll}+\textsc{disj} 	&	 \textsc{linSU} 	&	 \textsc{one} 	&	 \textsc{Lshr} 	&	\textsc{Pshr} 	&	 \textsc{one} 	&	 \textsc{Lshr} 	&	 \textsc{Pshr} \\
 \cmidrule{1-11}																						
 ADF   									&	 20 (20) 	&	 19 (19) 	&	 19 (19) 	&	 18 (18)  	&	 14 (14) 	&	 17 (17) 	&	18 (18) 	&	 14 (14) 	&	 18 (18)	&	 18 (18)  \\
 ConnectedStillLife     	&	 7 (10) 	&	 11 (11) 	&	 13 (13) 	&	 0    \phantom{0}(0) 	&	 13 (13) 	&	 10 (10) 	&	10 (12) 	&	 13 (13) 	&	 9 (11)	&	 10 (10)  \\
 CrossingMin    				&	 7   \phantom{0}(7) 	&	 19 (19) 	&	 19 (19) 	&	 2    \phantom{0}(2) 	&	 19 (19) 	&	 19 (19) 	&	19 (19) 	&	 19 (19) 	&	 19 (20)	&	 19 (19)  \\
 MaxClique 						&	 0    \phantom{0}(4)  	&	 16 (16) 	&	 16 (16) 	&	 0    \phantom{0}(0)  	&	 10 (10) 	&	 14 (14) 	&	16 (16) 	&	 11 (11) 	&	 15 (15)	&	 16 (16)  \\
 MaxSAT   							&	 7    \phantom{0}(7) 	&	 15 (15) 	&	 15 (15) 	&	 5    \phantom{0}(5) 	&	 19 (19) 	&	20 (20) 	&	20 (20) 	&	 18 (18) 	&	 20 (20)	&	 20 (20)  \\
 SteinerTree  					&	 3 (16)  	&	 1 \phantom{0}(1) 	&	 1    \phantom{0}(1) 	&	 1    \phantom{0}(1) 	&	 1    \phantom{0}(5) 	&	 1    \phantom{0}(3) 	&	1    \phantom{0}(3) 	&	 0    \phantom{0}(1) 	&	 0    \phantom{0}(0)	&	 0    \phantom{0}(1)  \\
 SystemSynthesis 			&	 0    \phantom{0}(1) 	&	 4    (11)  	&	 4    (11) 	&	 0    \phantom{0}(4) 	&	 0    \phantom{0}(2) 	&	 0    \phantom{0}(2) 	&	0    \phantom{0}(2) 	&	 0    \phantom{0}(2) 	&	 0    \phantom{0}(2)	&	 0    \phantom{0}(2)  \\
 ValvesLocation  				&	 16 (17) 	&	 9    \phantom{0}(9) 	&	 11 (11) 	&	 11 (11) 	&	 15 (17) 	&	 16 (19) 	&	16 (19) 	&	 14 (14) 	&	 16 (16)	&	 16 (16)  \\
 VideoStreaming 			&	 14 (19) 	&	 9    \phantom{0}(9) 	&	 9    \phantom{0}(9) 	&	 8    \phantom{0}(8)  	&	 0    \phantom{0}(0) 	&	 0    \phantom{0}(0) 	&	0    \phantom{0}(0) 	&	 9 (10) 	&	 9    \phantom{0}(9)	&	 9    \phantom{0}(9)  \\
 \cmidrule{1-11}																						
 \textbf{Total} 	&	 74 (101) 	&	 103 (110) 	&	 107 (114) 	&	 45 (49) 	&	91 (99) 	&	 97 (104) 	&	100 (109) 	&	98 (102) 	&	 106 (111)	&	 108 (111)  \\
 \bottomrule
	\end{tabular}
\end{table}

An overview of the obtained results is given in Table~\ref{tab:experiments}, where the number of solved instances is reported.
As a first comment, the fact that \textsc{clasp} is in general faster than \textsc{wasp} to complete stable model searches is confirmed by comparing the performance of the two solvers running linear search sat-unsat (\textsc{linSU} in the table; \textsc{clasp} solves 29 instances more than \textsc{wasp}) or the core based algorithms (difference of 12 instances, reduced to 9 if disjoint cores are computed).
This gap is completely filled by adding the shrinking procedures if disjoint cores are computed, and significantly reduced otherwise:
6 instances with linear based shrinking (\textsc{Lshr} in the table), and 3 instances with reitered progression (\textsc{Pshr} in the table).
It is also important to observe that the performance of \textsc{wasp} is boosted by disjoint cores only for instances of Video Streaming, where there are many strata and few unsatisfiable cores, which is the worst case for stratification.

Table~\ref{tab:experiments} also shows the number of wins of each tested strategy, where a strategy wins on a tested instance if it terminates in the allotted time, or it finds the smallest upper bound and none of the tested strategies terminated.
Within this respect, the linear search sat-unsat algorithm implemented by \textsc{clasp} has a significant performance improvement, in particular for instances of Steiner Tree and Video Streaming.
However, even according to this measure, a better performance is obtained by \textsc{clasp} using \textsc{oll} and disjoint cores analysis, which is also reached by \textsc{wasp} thanks to the core shrinking procedures introduced in this paper.

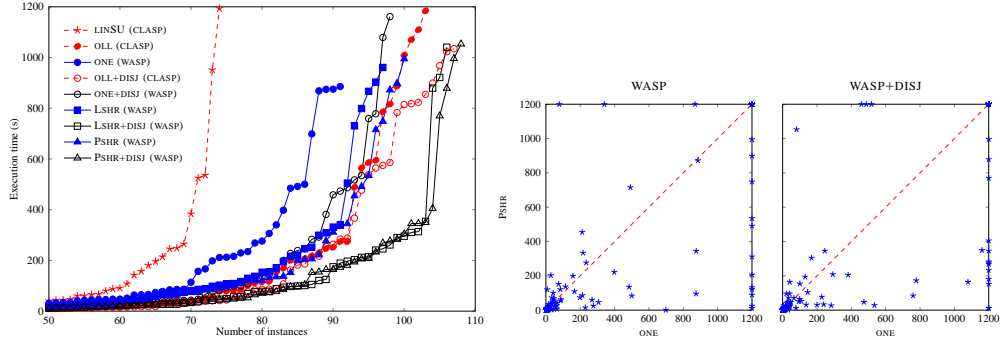
\begin{figure}[t]
	\figrule
	\begin{tikzpicture}[scale=0.6]
	\pgfkeys{%
		/pgf/number format/set thousands separator = {}}
	\begin{axis}[
	scale only axis
	, font=\scriptsize
	, x label style = {at={(axis description cs:0.5,0.04)}}
	, y label style = {at={(axis description cs:0.05,0.5)}}
	, xlabel={Number of instances}
	, ylabel={Execution time (s)}
	, xmin=50, xmax=110
	, ymin=0, ymax=1200
	, legend style={at={(0.18,0.96)},anchor=north, draw=none,fill=none}
	, legend columns=1
	, width=0.7\textwidth
	, height=0.5\textwidth
	, ytick={0,200,400,600,800,1000,1200}
	, xtick={50,60,70,80,90,100,110}
	, major tick length=2pt
	]
	\addplot [mark size=2.5pt, color=red, mark=star, dashed] [unbounded coords=jump] table[col sep=semicolon, y index=1] {./cactus.csv}; 
    	\addlegendentry{\textsc{linSU (clasp)}}

	\addplot [mark size=2pt, color=red, mark=*, dashed] [unbounded coords=jump] table[col sep=semicolon, y index=2] {./cactus.csv}; 
    	\addlegendentry{\textsc{oll (clasp)}}
	\addplot [mark size=2pt, color=blue, mark=*] [unbounded coords=jump] table[col sep=semicolon, y index=5] {./cactus.csv}; 
    	\addlegendentry{\textsc{one (wasp)}}
	\addplot [mark size=2pt, color=red, mark=o, dashed, mark options=solid] [unbounded coords=jump] table[col sep=semicolon, y index=3] {./cactus.csv}; 

    	\addlegendentry{\textsc{oll+disj (clasp)}}
	\addplot [mark size=2pt, color=black, mark=o] [unbounded coords=jump] table[col sep=semicolon, y index=6] {./cactus.csv}; 
    	\addlegendentry{\textsc{one+disj (wasp)}}

%	\addplot [mark size=1.0pt, color=blue, mark=+] [unbounded coords=jump] table[col sep=semicolon, y index=4] {./graphs/cactus.csv}; 
%	\addlegendentry{\textsc{linsu (wasp)}}

	\addplot [mark size=2.0pt, color=blue, mark=square*] [unbounded coords=jump] table[col sep=semicolon, y index=7] {./cactus.csv}; 
    	\addlegendentry{\textsc{Lshr (wasp)}}
	\addplot [mark size=2pt, color=black, mark=square] [unbounded coords=jump] table[col sep=semicolon, y index=8] {./cactus.csv}; 
    	\addlegendentry{\textsc{Lshr+disj (wasp)}}

	\addplot [mark size=2.5pt, color=blue, mark=triangle*] [unbounded coords=jump] table[col sep=semicolon, y index=9] {./cactus.csv}; 
    	\addlegendentry{\textsc{Pshr (wasp)}}
	\addplot [mark size=2.5pt, color=black, mark=triangle] [unbounded coords=jump] table[col sep=semicolon, y index=10] {./cactus.csv}; 
    	\addlegendentry{\textsc{Pshr+disj (wasp)}}
	\end{axis}
	\end{tikzpicture}
	%
	\begin{comment}
	\begin{tikzpicture}[scale=0.6]
	\pgfkeys{%
		/pgf/number format/set thousands separator = {}}
	\begin{axis}[
	scale only axis
	, font=\scriptsize
	, x label style = {at={(axis description cs:0.5,0.04)}}
	, y label style = {at={(axis description cs:0.05,0.5)}}
	, xlabel={Number of instances}
	%, ylabel={Execution time (s)}
	, xmin=50, xmax=110
	, ymin=0, ymax=1200
	, legend style={at={(0.18,0.96)},anchor=north, draw=none,fill=none}
	, legend columns=1
	, width=0.7\textwidth
	, height=0.5\textwidth
	, yticklabels={}
	, xtick={50,60,70,80,90,100,110}
	, major tick length=2pt
	]
		\addplot [mark size=2pt, color=blue, mark=*] [unbounded coords=jump] table[col sep=semicolon, y index=5] {./graphs/cactus.csv}; 
		\addlegendentry{\textsc{one (wasp)}}

		\addplot [mark size=2pt, color=black, mark=o] [unbounded coords=jump] table[col sep=semicolon, y index=6] {./graphs/cactus.csv}; 
		\addlegendentry{\textsc{one+disj (wasp)}}		
		\addplot [mark size=2.5pt, color=blue, mark=triangle*] [unbounded coords=jump] table[col sep=semicolon, y index=9] {./graphs/cactus.csv}; 
		\addlegendentry{\textsc{Pmin (wasp)}}
		\addplot [mark size=2.5pt, color=black, mark=triangle] [unbounded coords=jump] table[col sep=semicolon, y index=10] {./graphs/cactus.csv}; 
		\addlegendentry{\textsc{Pmin+disj (wasp)}}
	\end{axis}
	\end{tikzpicture}
	\end{comment}
    %
	\begin{tikzpicture}[scale=0.58]
	\pgfkeys{%
		/pgf/number format/set thousands separator = {}}
	\begin{axis}[
	scale only axis
	, font=\scriptsize
	, x label style = {at={(axis description cs:0.5,0.04)}}
	, y label style = {at={(axis description cs:0.05,0.5)}}
	, xlabel={\textsc{one}}
	, ylabel={\textsc{Pshr}}
	, width=0.35\textwidth
	, height=0.35\textwidth
	, xmin=0, xmax=1200
	, ymin=0, ymax=1200
	, xtick={0,200,400,600,800,1000,1200}
	, ytick={0,200,400,600,800,1000,1200}
	, major tick length=2pt
	, title={\normalsize{\textsc{wasp}}}
	]
	\addplot [mark size=2pt, only marks, color=blue, mark=star, dashed] [unbounded coords=jump] table[col sep=semicolon, x index=2, y index=3] {./scatter.csv}; 	
	\addplot [color=red, dashed] [unbounded coords=jump] table[col sep=semicolon, x index=1, y index=1] {./cactus.csv}; 
	\end{axis}
	\end{tikzpicture}
	\begin{tikzpicture}[scale=0.58]
	\pgfkeys{%
		/pgf/number format/set thousands separator = {}}
	\begin{axis}[
	scale only axis
	, font=\scriptsize
	, x label style = {at={(axis description cs:0.5,0.04)}}
	%, y label style = {at={(axis description cs:0.15,0.5)}}
	, xlabel={\textsc{one}}
	%, ylabel={\textsc{Pmin}}
	, width=0.35\textwidth
	, height=0.35\textwidth
	, xmin=0, xmax=1200
	, ymin=0, ymax=1200
	, xtick={0,200,400,600,800,1000,1200}
	, ytick={0,200,400,600,800,1000,1200}
	, major tick length=2pt
	, title={\normalsize{\textsc{wasp+disj}}}
	, yticklabels={}
	]
	\addplot [mark size=2pt, only marks, color=blue, mark=star, dashed] [unbounded coords=jump] table[col sep=semicolon, x index=4, y index=5] {./scatter.csv}; 	
	\addplot [color=red, dashed] [unbounded coords=jump] table[col sep=semicolon, x index=1, y index=1] {./cactus.csv}; 
	\end{axis}
	\end{tikzpicture}	
	\caption{Solved instances within a bound on the running time (cactus plot on the left), and instance-by-instance comparison of \textsc{one} with and without \textsc{Pshr} (scatter plots on the right).}\label{fig:cactus}
	\figrule
\end{figure}

Concerning the average execution time of the tested algorithms, a cactus plot is reported in 
Fig.~\ref{fig:cactus} (left; \textsc{wasp} using \textsc{linSU} is not reported because it solves less than 50 instances).
The graph highlights that core based algorithms are faster than linear search sat-unsat in more testcases.
Moreover, and more important, the addition of core shrinking does not add overhead to \textsc{wasp}.
This aspect is even more clear in the scatter plots on the right, where the impact of \textsc{Pshr} is shown instance-by-instance.
It can be observed that the majority of points are below the diagonal, meaning that often core shrinking provides a performance gain, and only in a few cases it introduces overhead.
The main reason for this performance improvement is that shrinking a core often implies that subsequently found unsatisfiable cores are smaller:
The cumulative number of literals in the analyzed cores is reduced by at least $68\%$ when shrinking is performed (excluding Steiner Tree, System Synthesis and Video Streaming, for which \textsc{wasp} found few unsatisfiable cores).
The budget is reached at least once in each problem, and often no more than 2 times, with a peak of 20--25 times on average for instances of Max Clique and Still Life.

Another advantage of unsatisfiable core shrinking is that better and better stable models are possibly discovered while computing an optimum stable model.
In order to measure the impact of our strategies within this respect, let us define the \emph{estimate error} $\epsilon$ of the last stable model produced by Algorithm~\ref{alg:one} as follows:
\begin{equation*}
    \epsilon(\mathit{ub},\mathit{lb}) := \left\{
        \begin{array}{cl}
            \frac{\mathit{ub}-\mathit{lb}}{\mathit{lb}} & \text{if } \mathit{ub} \neq \infty \text{ and } \mathit{lb} \neq 0; \\
            \infty & \text{if } \mathit{ub} = \infty, \text{ or both } \mathit{ub} \neq 0 \text{ and } \mathit{lb} = 0;\\
            0 & \text{if } \mathit{ub} = \mathit{lb} = 0. 
        \end{array}
    \right.
\end{equation*}
Hence, the cost associated with the stable model returned by Algorithm~\ref{alg:one} is at most $\epsilon(\mathit{ub},\mathit{lb})$ times greater than the cost of an optimum stable model.
Such a measure is applicable to testcases comprising weak constraints of the same level, which is the case for all problems in our benchmark but ADF and System Synthesis.

\begin{table}[b!]
	\caption{
		Number of solved instances within a given error estimation (140 testcases).
	}\label{tab:approx}
	\tabcolsep=0.030cm	
	\centering
	\begin{tabular}{rcrrrrrrrrrrrr}
		\toprule
		&&& \multicolumn{3}{c}{\wasp}	& \phantom{00} &\multicolumn{3}{c}{\waspdisj} & \phantom{00} & \multicolumn{3}{c}{\clasp + best $\mathit{lb}$ by \textsc{wasp}} \\
		\cmidrule{4-6}\cmidrule{8-10}\cmidrule{12-14}		
\multicolumn{2}{c}{$\epsilon(\mathit{ub},\mathit{lb})$}		&& \textsc{one}	&	\textsc{Lshr}	&	\textsc{Pshr}	&&	\textsc{one}	&	\textsc{Lshr}	&	\textsc{Pshr} && \phantom{xxx} & \textsc{linSU} & \textsc{oll}+\textsc{disj} \\
		\cmidrule{1-14}	
				$0.00\%$	&	(0)		&&	77		&	80	&	82	&&	84	&	88	&	90 &&& 77 &	84 \\
				$\leq \phantom{00}6.25\%$ & (1/16)	&&	77		&	87	&	88	&&	86	&	95	&	94 &&& 90 &	87 \\
				$\leq \phantom{0}12.50\%$ & (1/8)  &&	77	&	91	&	93	&&	86	&	101	&	97 &&& 96 & 88	\\
				$\leq \phantom{0}25.00\%$ & (1/4)	 &&	77	&	97	&	95	&&	92	&	105	&	103 &&& 99 &	99 \\
				$\leq \phantom{0}50.00\%$ & (1/2)	&&	78	&	97	&	97	&&	102	&	105	&	105 &&& 99 &  101	\\
				$\leq 100.00\%$	&	(1)	&&	78	&	97	&	97	&&	104	&	105	&	105 &&& 107 & 105	\\				
%				>2 & 19 & 19	&	81	&	100	&	100	&	116	&	108	&	114\\
 \bottomrule
\end{tabular}
\end{table}

Table~\ref{tab:approx} reports the number of instances for which \textsc{wasp} produced a stable model within a given error estimate.
In particular, the first row shows the number of instances for which an optimum stable model was computed (error estimate is 0).
The last row, instead, shows the number of instances solved with error estimate bounded by 1, and smaller values for the error estimate are considered in the intermediate rows.
It is interesting to observe that without shrinking and disjoint cores analysis, the number of solved instances only increases by 1, confirming the fact that stratification alone is not sufficient to produce suboptimal stable models of reasonable cost.
Better results are instead obtained by adding core shrinking, which gives an increase of up to 17 instances, many of which with an error estimate bounded by $6.25\%$.
Another interesting observation concerns disjoint cores analysis.
The stable model produced after the analysis of all disjoint cores is already sufficient to obtain an error estimate bounded by $100\%$ for many tested instances.
However, many of these stable models have an error estimate greater than $25\%$.
Also in this case, adding core shrinking leads to better results.

For the sake of completeness, also \textsc{clasp} is included in Table~\ref{tab:approx}.
However, since \textsc{clasp} does not print any lower bound, the best value for $\mathit{lb}$ produced by \textsc{wasp} is combined with the upper bounds given by \textsc{clasp} running \textsc{linSU} and \textsc{oll}.
If an error estimate of $100\%$ is acceptable, then the number of stable models produced by \textsc{clasp} is aligned with \textsc{wasp}, or even better.
However, when the error estimate must be less or equal than $50\%$, the combination of disjoint cores analysis and core shrinking implemented by \textsc{wasp} leads to better results in this benchmark.

\section{Related work}

Weak constraints are the analogous of soft clauses in MaxSAT \cite{DBLP:conf/aaai/ChaIKM97}, which was also considered as target language by some ASP solver \cite{DBLP:conf/kr/LiuJN12}.
The main difference is that weak constraints are associated with levels and weights, while soft clauses can be only associated with weights.
However, levels can be simulated by properly modifying weights as follows:
\begin{enumerate}
\item
Let $s := 1 + \sum_{r \in \W^1}{\mathit{weight}(r)}$, and let $l$ be the smallest integer such that $l \geq 2$ and $\W^l \neq \emptyset$.

\item
For all weak constraints $r$ in $\W^l$, set $\mathit{weight}(r) := \mathit{weight}(r) \cdot s$ and $\mathit{level}(r) := 1$.

\item
Repeat steps 1--2 while there are weak constraints of level different from 1.
\end{enumerate}
Actually, the above process was reverted in the MaxSAT literature in order to take advantage from the ``identification of levels'' \cite{DBLP:conf/ijcai/ArgelichLS09}.
In fact, taking into account levels is crucial for some algorithms, as for example linear search sat-unsat (i.e., Algorithm~\ref{alg:linear}).

\begin{example}
Consider the following simple program $\Pi_2$, for any $m \geq 1$ and any $n \geq 1$:
\begin{equation*}
    \begin{array}{lll}
        p_{ij} \leftarrow \naf\naf p_{ij} \qquad&
        1@i \cla \naf p_{ij} \qquad& 
        \forall i \in [1..m], \forall j \in [1..n] \\
    \end{array}
\end{equation*}
In the worst case, Algorithm~\ref{alg:linear} has to find $n$ stable models before understanding that the optimum cost of level $m$ is 0.
Similarly for the other levels, for a total of $m \cdot n$ stable models in the worst case.
If levels are ignored, then in the worst case the upper bound is improved by one after each stable model search, and therefore exponentially many stable models are computed.
\hfill$\blacksquare$
\end{example}

The first algorithms based on unsatisfiable core analysis were introduced in MaxSAT \cite{DBLP:conf/sat/FuM06,DBLP:conf/sat/Marques-SilvaM08,DBLP:conf/date/Marques-SilvaP08,DBLP:conf/sat/ManquinhoSP09,DBLP:conf/sat/AnsoteguiBL09}, and subsequently ported to ASP in the experimental solver \textsc{unclasp} \cite{DBLP:conf/iclp/AndresKMS12}, where also \textsc{oll} was presented.
The main drawback of \textsc{oll} is that it may add new constraints if soft literals introduced by the analysis of previous cores belong to subsequently detected unsatisfiable cores, and these new constraints only minimal differ from those already added to the input program.
This drawback was later circumvented in the MaxSAT solver \textsc{mscg} using a smart encoding based on sorting networks \cite{DBLP:conf/cp/MorgadoDM14}.

Algorithm \textsc{oll} was also implemented in the ASP solver \textsc{wasp} \cite{Alviano01092015}, and compared with other algorithms from MaxSAT literature such as \textsc{pmres} \cite{DBLP:conf/aaai/NarodytskaB14}.
These two algorithms, \textsc{oll} and \textsc{pmres}, are also the origin of two other algorithms for MaxSAT, namely \textsc{one} and \textsc{k} \cite{DBLP:conf/ijcai/AlvianoDR15}.
In fact, \textsc{one} is a simplification of \textsc{oll}, and adds exactly one constraint for each analyzed core;
whether new soft literals will be part of other unsatisfiable cores is irrelevant for \textsc{one}, as all the information required to complete the computation is already encoded in the added constraint.

Concerning \textsc{k}, instead, it is a generalization of \textsc{one} based on the observation that aggregates involving a huge number of literals are handled inefficiently in the refutation process.
This fact was first observed by \citeANP{DBLP:conf/aaai/NarodytskaB14}, who proposed \textsc{pmres}:
the inefficiency is circumvented because \textsc{pmres} introduces constraints of size 3.
The more general idea underlying \textsc{k} is to bound the size of the new constraints to a given constant $k$.
Hence, \textsc{k} combines \textsc{one} and \textsc{pmres}, which are obtained for special values of $k$ (respectively, infinity and 3).
%
%This paper only considered \textsc{one} because the analyzed unsatisfiable cores in the tested instances are not as huge as those usually found for MaxSAT instances, and therefore \textsc{k} would not provide the sensible performance gain that was observed for MaxSAT.

Finally, it is important to observe that current core based algorithms in ASP cannot continuously improve both lower and upper bound.
Alternatives for obtaining such a behavior are based on combinations of different algorithms, either sequentially or in parallel \cite{Alviano01092015,DBLP:conf/lpnmr/GebserKK0S15}.
The strategy suggested in this paper is instead to improve core based algorithms by shrinking unsatisfiable cores.
Within this respect, the proposed shrinking procedures are also original, and the closest approaches in MaxSAT iteratively remove one literal from the unsatisfiable core, either obtaining a smaller unsatisfiable core, or a necessary literal in the processed unsatisfiable core \cite{DBLP:conf/fmcad/Nadel10,DBLP:journals/jsat/NadelRS14}.

\section{Conclusion}

The combination of ASP programs and weak constraints is important to ease the modeling of optimization problems.
However, the computation of optimum stable models is often very hard, and suboptimal stable models may be the only affordable solutions in some cases.
Despite that fact, the most efficient algorithms developed so far, which are based on unsatisfiable core analysis, produce few (or even no) stable models while running to completion.

A concrete strategy to improve current ASP solvers is presented in this paper:
better and better stable models can be produced if unsatisfiable cores are shrunk.
The overhead due to the shrinking process is limited by introducing a budget on the running time, and eventually a performance gain is obtained thanks to the reduced size of the analyzed unsatisfiable cores.
On the instances of the Sixth ASP Competition, our implementation is often able to provide (suboptimal) stable models with a guarantee of distance to the optimum cost of around $10\%$.

\section*{Acknowledgement}

This work was partially supported by the National Group for Scientific Computation (GNCS-INDAM), by the Italian Ministry of University and Research under PON project ``Ba2Know (Business Analytics to Know) Service Innovation - LAB'', No.\ PON03PE\_00001\_1, and by the Italian Ministry of Economic Development under project ``PIUCultura (Paradigmi Innovativi per l'Utilizzo della Cultura)'' n.\ F/020016/01--02/X27.

\bibliographystyle{acmtrans}
%\bibliography{bibtex}

\begin{comment}

\end{comment}

\label{lastpage}

\newenvironment{reviewer}
{\em\noindent\ignorespaces}{\ignorespacesafterend \medskip }

\newenvironment{answer}
{\color{blue}\noindent\ignorespaces}{\ignorespacesafterend \bigskip }

\end{document}